\theoremstyle{theorem}
\newtheorem{theorem}{Theorem}
\theoremstyle{example}
\newtheorem{example}{Example}
\theoremstyle{definition}
\newtheorem{remark}{Remark}
\newcommand{\mnset}{\ensuremath{\mathcal{M}}}
\newcommand{\nnset}{\ensuremath{\mathcal{N}}}
\newcommand{\qset}{\ensuremath{\mathcal{Q}}}
\newcommand{\mf}{\enspace .}
\newcommand{\mc}{\enspace ,}
\DeclareMathOperator{\ii}{int}
\DeclareMathOperator{\ee}{ext}
\newcommand{\mil}{\ensuremath{\mu}-link}
\newcommand{\nol}{\ensuremath{\nu}-link}
\newcommand{\vsa}{VS1}
\newcommand{\vsb}{VS2}
\newcommand{\mils}{\mil{}s}
\newcommand{\nols}{\nol{}s}
\newcommand{\mn}{\ensuremath{\mu\nu}}
\newcommand{\nn}{\ensuremath{\nu\nu}}
\newcommand{\deq}{\mathrel{\mathop:}=}
\newlength{\plen}
\begin{document}

\title{OutFlank Routing: Increasing Throughput in Toroidal Interconnection
  Networks\footnote{To appear in the 19th IEEE International Conference on
    Parallel and Distributed Systems (ICPADS 2013).}}

\author{Francesco Versaci}
\date{Vienna University of Technology}

\maketitle

\begin{abstract}
  We present a new, deadlock-free, routing scheme for toroidal interconnection
  networks, called OutFlank Routing (OFR). OFR is an adaptive strategy which
  exploits non-minimal links, both in the source and in the destination nodes.
  When minimal links are congested, OFR deroutes packets to carefully chosen
  intermediate destinations, in order to obtain travel paths which are only an
  additive constant longer than the shortest ones.  Since routing performance is
  very sensitive to changes in the traffic model or in the router parameters, an
  accurate discrete-event simulator of the toroidal network has been developed
  to empirically validate OFR, by comparing it against other relevant routing
  strategies, over a range of typical real-world traffic patterns. On the
  16\texttimes16\texttimes16 (4096 nodes) simulated network OFR exhibits
  improvements of the maximum sustained throughput between 14\% and 114\%, with
  respect to Adaptive Bubble Routing.
\end{abstract}\\

\noindent {\bf Keywords}: Routing, Interconnection networks, Adaptive routing,
Toroidal networks, High performance computing.

\section{Introduction}

Tori are key topologies for the interconnection networks of parallel computers,
both in multiprocessors machines and on-chip multiprocessors
\cite{DallyT01}. E.g., the Fujitsu K computer features a six-dimensional torus
network \cite{AjimaTIHS11}, IBM BlueGene/Q a five-dimensional one
\cite{ChenEtal11}, while both the Cray XT series \cite{BrooksK11} and IBM
BlueGene/L and P \cite{adiga2005blue,alam2008bluegenp} have a three-dimensional
one. Furthermore, the adoption of mesh-like topologies is bound to increase even
more, since they represent the only scalable solution under a limiting
technology point of view \cite{BilardiP95}.

Routing is the fundamental problem of choosing paths (or other network
resources, such as link queues) for packets which travel in a network, between a
source node $s$ and a destination node $t$, and has been studied intensively,
both by theoretical means and empirical investigations. Typically, routing is
studied under some permutation pattern, which causes each node to send and
receive the same amount of data.
Routing algorithms can be roughly classified in \emph{oblivious}, for which the
path from $s$ to $t$ depends only on $s$ and $t$ (and optionally on some
randomness), and \emph{adaptive}, for which decisions on the path can also
depend on online network parameters, such as, e.g., congestion at edges.

Theoretical investigations have primarily focused on oblivious algorithms, partly
because of the difficulties encountered in analyzing adaptive ones, exemplified
in 1982 by L.~Valiant \cite{Valiant82}: ``Unfortunately such [adaptive]
strategies appear to be beyond rigorous analysis or testability''. The first
nontrivial lower bounds concerning a wide class of adaptive algorithms for the
two-dimensional mesh have been presented in \cite{ChinnLT96,Ben-AroyaCS98} (and,
furthermore, are the first to take into account bounded queues in the routers).
Among oblivious strategies there is a large performance gap if randomness is
allowed: e.g., deterministic policies perform far from optimal under a
worst-case congestion metrics, since on a graph with $N$ nodes and maximum
degree $d$ there always exists a permutation which causes edge congestion
$\Omega\left(\sqrt{N}/d\right)$ \cite{BorodinH85,KaklamanisKT91} while, on the
other hand, randomized strategies can achieve a $O(\log N)$ competitive ratio
against the optimal offline
\cite{ValiantB81,BuschMX08,Racke09,Racke08,Vocking01}.
However, in current supercomputers randomized oblivious strategies have seldom
been implemented, a notable exception being the recent PERCS interconnect
\cite{ArimilliEtal10}.

Though difficult to analyze, adaptive routing algorithms have shown to perform
remarkably well on real machines and have thus been adopted, e.g., by IBM
BlueGene/L, P and Q \cite{BlumrichEtal03,alam2008bluegenp,ChenEtal11}, Cray T3E
\cite{scott1996cray} and Quadrics \cite{GeoffrayH08}.
An important property of adaptive algorithms is whether they are \emph{minimal},
i.e., if they allow only shortest paths from source to destination (if
\emph{any} shortest path is admissible then they are called \emph{fully}
minimal).
Minimal strategies ensure shortest delivery time when the network is under-loaded
and preclude \emph{livelock}, i.e., the possibility for a packet to move
indefinitely in the network without reaching its destination.
On the other hand, minimal strategies underutilize the network bandwidth: e.g.,
on a three-dimensional torus there are six outgoing links from each node, but
only three (or less) of these lie along minimal directions to a given
destination node; hence, the maximum throughput achievable in the torus by a
minimal routing can be half (or less) of the one achievable by allowing non
minimal paths.

An interesting technique for increasing the maximum throughput while keeping the
router complexity low has been proposed by Singh et al.\
\cite{SinghDTG02,SinghDGT03,SinghDTG04,SinghDGT04} for $k$-ary $n$-cubes (i.e.,
for $n$-dimensional tori with length $k$ in each dimension).  They have
suggested to deroute packets non-minimally along ``crowded'' dimensions, in
order to decrease the congestion of the \emph{minimal orthant}, i.e., the part
of the network which contains all (and only) the minimal paths.
In more detail, to route each \emph{newly injected} packet, the router can
choose among $2^n$ orthants, based on the congestion of the outgoing links. Once
the orthant has been decided (e.g., $(x+, y-)$ in two dimensions), the packet
can travel in each dimension only according to the associated directions.  In
the following, we refer to such a general strategy (which comprises the GAL,
GOAL and CQR policies) as Pick Orthant Routing (POR).
This kind of derouting has the clear advantage of spreading the traffic along
the network, hence decreasing the congestion in hot spots, but can arbitrarily
lengthen a path, since it may force a packet, which has as destination a
neighbor of the source node, to move in the non-minimal direction all around the
torus (thus traversing $k-1$ nodes instead of just one).

In this work, we propose an adaptive strategy which tries to make use of the
full bandwidth while also limiting the path dilation to an \emph{additive
  constant}.
The OutFlank Routing (OFR) schematically works as follows: when the minimal
links are congested it starts derouting packets to Intermediate Destination
Nodes (IDNs). There are two types of IDNs which are considered: Wraparound IDNs,
which mimic the POR routing strategy, and OutFlank IDNs, which \emph{outflank}
the minimal orthant and only increase the path length by a constant value,
while still exploiting the available non-minimal links, both in the source and
the destination nodes (see Fig.~\ref{fig:ofscheme} for an example in two
dimensions). Another interesting feature of the OFR policy comes from the fact
that, by using only OutFlank IDNs, it might also be adapted to work on
(non-toroidal) mesh networks, whereas the POR strategy is strictly linked to the
toroidal topology.  In this paper, however, we will consider only toroidal
networks, in order to have a more uniform comparison with related routing
strategies.
Since routing performance is very sensitive to changes in the traffic generation
model or in the parameters of the adopted routing strategy, we have developed an
accurate discrete-event simulator of the toroidal networks (built on top of the
OMNeT++ network simulation framework \cite{Varga10}), which we have used to
empirically validate OFR. In detail, we have compared it against both POR and
the Adaptive Bubble Router \cite{PuenteBGPDI99,PuenteIBGVP01} (ABR, described in
Sec.~\ref{sec:rout-toro-netw}), adopted in BlueGene machines, for various
real-world traffic patterns.

\begin{figure*}
  \centering
  \subfloat[OutFlank IDNs -- $q^0$ and $q^1$ are OIDNs]{
    {\includegraphics[width=89mm]{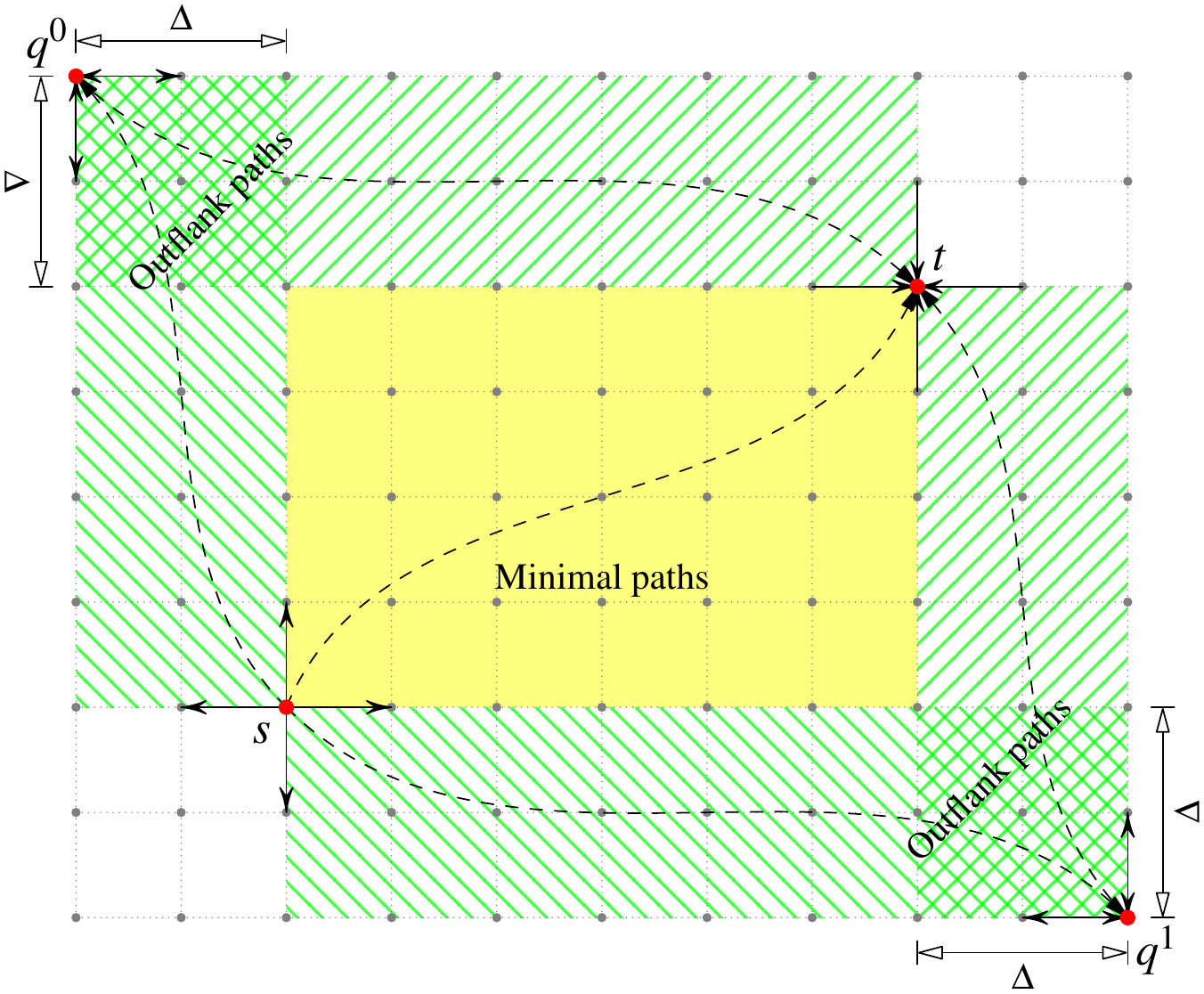}}
    \label{fig:of2da}
  }
  \hspace{5mm}
  \subfloat[Wraparound IDNs -- $q^0$, $q^1$ and $q^2$ are WIDNs]{
    {\includegraphics[width=80mm]{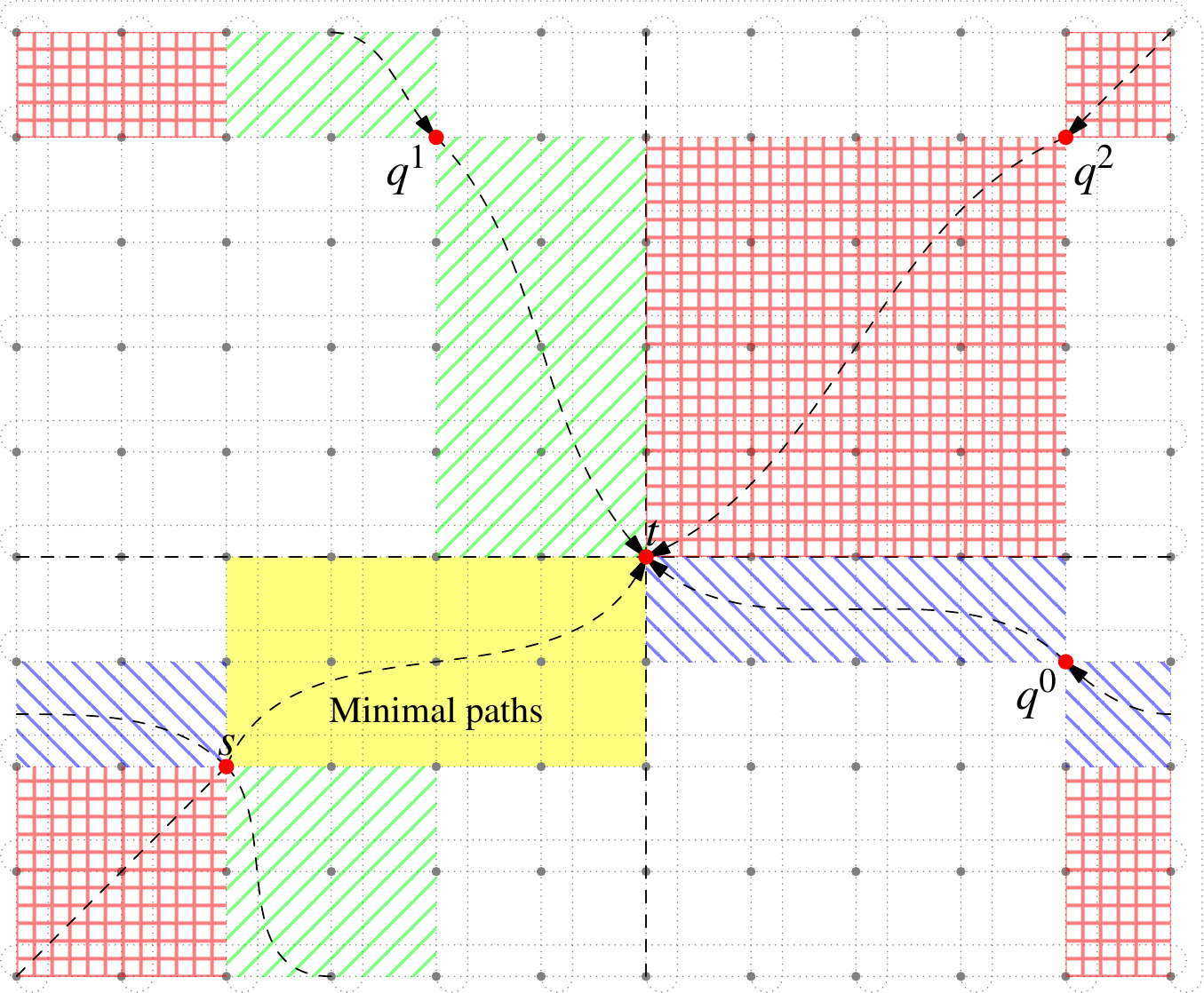}}
    \label{fig:of2db}
  }
  \caption{OutFlank routing in the two-dimensional torus.}
  \label{fig:ofscheme}
\end{figure*}

The paper is structured as follows: in Section~\ref{sec:rout-toro-netw} we
present the model adopted for the network and the traffic generation and review
some relevant results which will be used later; in
Section~\ref{sec:outfl-rout-algor} we illustrate OFR for general toroidal
topologies, giving a detailed description of the widespread two and
three-dimensional cases; in Section~\ref{sec:exper-valid} we describe the
discrete-event simulator and discuss the experimental results obtained; finally,
Section~\ref{sec:concl-future-work} contains the conclusions.

\section{Network model and preliminaries}
\label{sec:rout-toro-netw}

We consider $n$-dimensional toroidal interconnection networks, with
bidirectional links between the nodes. Each physical node comprises of
\begin{enumerate}
\item a generator which produces the packets to be sent,
\item a router which routes the packets,
\item a sink which consumes the packets.
\end{enumerate}
Traffic is generated in bunches of $M$ packets, all having the same destination
(a bunch represents a \emph{message}), which are injected into the router via a
dedicated link with bandwidth $B_{\ii}$ and latency $\lambda_{\ii}$. A similar
link connects the router to the sink. The router has outgoing links to its $2n$
neighbors, with bandwidth $B_{\ee}$ and latency $\lambda_{\ee}$. Typically, the
packets size $S$ is chosen to have $\lambda_{\ee}$ and $\frac{S}{B_{\ee}}$ of
the same order of magnitude, so that
\begin{inparaenum}[(i)]
\item programs which perform small communications do not suffer large latencies
  due to a large packet size,
\item programs which communicate with large chunks of data do not suffer too
  much overhead from fragmentation of the messages.
\end{inparaenum}

Switching of packets in routers is generally categorized in:
\begin{description}
\item[\rm\em Store-and-forward:] Messages are split in packets and sent
  \emph{independently}. Each packet is received, stored completely in a router
  queue and then forwarded.
\item[\rm\em Wormhole:] Messages are divided into flits, which are \emph{orderly} routed
  along the \emph{same} path \cite{DallyS87}. Routers can forward a message if
  they have storage space for at least one flit.
\item[\rm\em Virtual cut-through:] Messages are split in packets, sent
  independently. Packets are forwarded only if there is enough space to store
  them completely. As soon as the packet header is read the whole packet is
  \emph{simultaneously} buffered locally and forwarded to an outgoing link
  \cite{KermaniK79}.
\end{description}
The OFR paradigm applies to all of the above switching techniques, but the
current simulator employed in Sec.~\ref{sec:exper-valid} implements a
store-and-forward approach.

When designing a routing strategy a main concern is to avoid \emph{deadlocks},
i.e., presence of packets permanently stuck because of circular dependencies in
the resource requests. The theory of deadlock
\cite{DallyS85,DallyS87,Duato93,Duato96} typically makes use of various virtual
channels to increase the available links, so as to be able to impose a priority
order in the resources usage and remove the possibility of circular
dependencies.
A different, simpler approach \cite{Roscoe87} has been adopted by ABR
\cite{PuenteBGPDI99,PuenteIBGVP01}, which works by preventing all buffers to get
completely filled: if some free slots (\emph{bubbles}) are always left in the
network, then packets can constantly move and deadlock cannot occur.
In detail, ABR has two Virtual Channels (VCs) for each unidirectional link, an
\emph{escape} (bubble) one and an \emph{adaptive} one. Packets in the escape
network move in dimension order and can be injected into a router (from the
generator or from a lower dimension) only if there are at least \emph{two} slots
available.  (The dimension order breaks deadlocks between packets moving along
different dimensions, the bubble rule breaks deadlocks inside one-dimensional
cycles.)
The adaptive subnetwork complements the escape one: packets are preferably
routed into adaptive queues, if available; any adaptive queue lying along a
minimal path can be chosen (fully minimal) and only when all the adaptive queues
are full, escape queues are considered (packets moving from adaptive to escape
queues are regarded as injection, i.e., two free slots are required for the
insertion).

\section{The OFR algorithm}
\label{sec:outfl-rout-algor}

OFR can be regarded as a higher level mechanism which exploits an underlying
minimal router: when a packet is firstly injected into the network, OFR decides
if it is convenient to deroute it and, if that is the case, chooses a suitable
IDN $q$ for it; packet header is extended to include the optional IDN and
packets are first routed (\emph{minimally}) to their IDN (if they have one) and
then to their final destination. We have chosen to adopt ABR as the underlying
minimal routing scheme for OFR. The main task of OFR is to carefully choose if
and where to deroute a packet.  In this section we thoroughly describe these
mechanisms.
In \S\ref{sec:admiss-interm-points} we define the set of admissible IDNs which
can be used for derouting; in \S\ref{sec:how-choose-interm} we see how to
adaptively choose among the possible IDNs, weighting congestion and path
dilation; in \S\ref{sec:deadlock-livelock} we discuss how to prevent deadlock
and livelock to occur when using OFR.

\subsection{Admissible IDNs}
\label{sec:admiss-interm-points}

Consider an $n$-dimensional toroidal network of size $k_1\times \cdots\times
k_n$, a source node $s$ with coordinates $s=(s_1, \ldots, s_n)$ and a
destination node $t=(t_1, \ldots, t_n)$. There are $2n$ outgoing links from $s$
and $2n$ incoming links to $t$, providing an upperbound on the exploitable
bandwidth between $s$ and $t$.  Each of the above links can be minimal (\mil) or
non-minimal (\nol). Each dimension, under the point of view of the source $s$,
$i\in\{1, \ldots, n\}$, can be classified either as:
\begin{enumerate}
\item \mn\ (minimal/non-minimal), if $s_i \not = t_i$,
\item \nn\ (non-minimal/non-minimal), if $s_i = t_i$.
\end{enumerate}
We call \mnset\ the set of all \mn\ dimensions, and \nnset\ the set of all \nn\
dimensions.  (If $i\in\mnset$, then one of the two outgoing links from $s$ along
dimension $i$ is a \mil, and the other one is a \nol; if $i\in\nnset$, then both
the two outgoing links from $s$ along dimension $i$ are \nols.)
Minimal routers use only \mils, both in the source and in the destination
nodes, and hence cannot exploit \nn\ dimensions. OFR strategy takes advantage
also of \nols, both along \mn\ and \nn\ dimensions.

OFR exploits non-minimal links by considering two kinds of possible IDNs:
Wraparound IDNs (WIDNs) and OutFlank IDNs (OIDNs).

\subsubsection{Wraparound IDNs}
\label{sec:wraparound-idns}

WIDNs enable OFR to mimic POR behavior.
In more detail, let $\beta$ be an $n$-dimensional binary vector:
$\beta=(\beta_1, \ldots, \beta_n)\in\{0,1\}^n$, then an admissible WIDN
$q(\beta)=(q_1(\beta_1), \ldots, q_n(\beta_n))$, is defined by setting
\begin{align}
  q_i(\beta_i) = \left\lfloor \frac{s_i+t_i+\beta_i k_i}{2} \right\rfloor \mf
\end{align}

When $\beta$ varies over all the possible $2^n$ values, the associated WIDNs lie
in the middle of each of the $2^n$ orthants (including the minimal one, which
can be discarded since it would not deroute the packet). By forcing a packet to
go through a WIDN we make it wrap around some edges, thus simulating the POR
behavior (see Fig.~\ref{fig:of2db} for an example in the two-dimensional case).

\subsubsection{OutFlank IDNs}
\label{sec:outflank-idns}

OIDNs are the IDNs which allow OFR to achieve constant dilation by
\emph{locally} derouting a packet.  Intuitively speaking, the OIDNs are chosen
near the vertices of the minimal orthant and outside of it (e.g., $\Delta$ steps
outside in each dimension). Formally, OIDN $q(\lambda)=(q_1(\lambda_1), \ldots,
q_n(\lambda_n))$ is determined by choosing for each dimension $i$ a value
$\lambda_i\in\{-1, 0, 1\}$ s.t.{}
\begin{align}
  \forall i\in\mnset \nonumber \\ q_i(\lambda_i) &=
  \begin{cases}
    ( s_i - \Delta ) \,\bmod k_i\mc & \lambda_i=-1 \quad\wedge\quad \text{$i-$ is a \nol} \\
    ( t_i - \Delta ) \,\bmod k_i\mc & \lambda_i=-1 \quad\wedge\quad \text{$i-$ is a \mil} \\
    ( s_i + \Delta ) \,\bmod k_i\mc & \lambda_i=1 \quad\wedge\quad \text{$i+$ is a \nol} \\
    ( t_i + \Delta ) \,\bmod k_i\mc & \lambda_i=1 \quad\wedge\quad \text{$i+$ is a \mil} \\
    \left\lfloor\frac{s_i+t_i}{2}\right\rfloor \mc & \lambda_i=0
  \end{cases}\mc\\
  \forall i\in\nnset \nonumber \\
  q_i(\lambda_i) &=
  \begin{cases}
    s_i + \lambda_i \Delta \, \bmod k_i\mc & \lambda_i \in \{-1,1\} \\
    s_i\mc & \lambda_i=0
  \end{cases}\mc
\end{align}
where $\Delta$ is a router parameter ($\Delta=2$ in the simulations of
Sec.~\ref{sec:exper-valid}).
Furthermore, OIDNs are chosen so as to exploit at least a \nol\ both in $s$ and
in $t$.
Let us now see an example of the admissible OIDNs to help clarify the general
procedure.

\begin{figure}
  \centering
  \includegraphics[width=90mm]{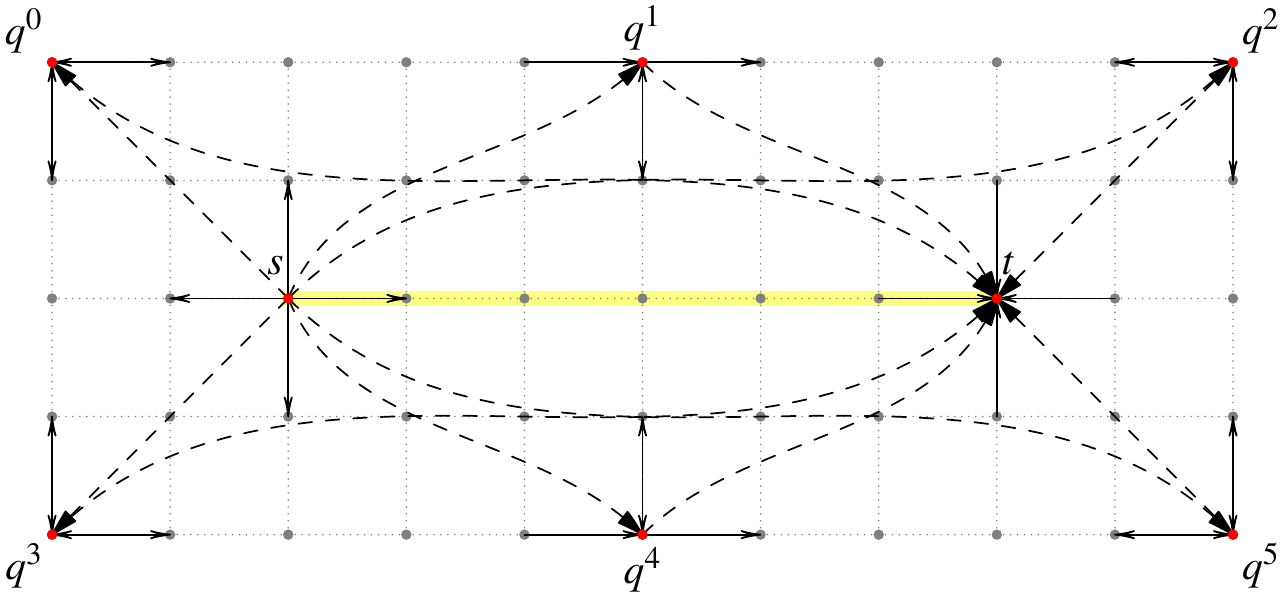}
  \caption{OFR -- Admissible OIDNs for collinear $s$ and $t$ (two-dimensional
    case).}
  \label{fig:inline-2d-of}
\end{figure}

\begin{example}[OFR in two-dimensional tori]
  \label{example:OFR-2dim}
  Consider Fig.~\ref{fig:of2da}: $s$ and $t$ have different coordinates in
  both the dimensions, so both $1$ (i.e., $x$) and $2$ (i.e., $y$) belong to
  \mnset. The minimal orthant is the yellow area, and the \mils\ are $(x+,
  y+)$. Since we want to use both a \mil\ and a \nol, the two possible choices
  are $(x-, y+)$ and $(x+, y-)$, which correspond to OIDNs $q^0$ and $q^1$ and
  $\lambda$ values $(1,-1)$ and $(-1,1)$. Now consider the case, illustrated in
  Fig.~\ref{fig:inline-2d-of} of $s$ and $t$ collinear. In this case
  $x\in\mnset$ and $y\in\nnset$, so the admissible choices for $\lambda$ are:
  \begin{align}
    \lambda^0&=(-1,1) \mc  & \lambda^1&=(0,1) \mc  & \lambda^2&=(1,1)  \mc \\
    \lambda^3&=(-1,-1)\mc & \lambda^4&=(0,-1)\mc & \lambda^5&=(1,-1) \mc
  \end{align}
  which induce the six OIDNs $q^0, \ldots, q^5$ represented in
  Fig.~\ref{fig:inline-2d-of}.
\end{example}

\begin{remark}
  When designing the actual router, not all the admissible OIDNs need to be
  considered for derouting, but it suffices to choose a subset of them which
  covers all the \nols, both in the source and destination nodes. E.g., in the
  case of collinear $s$ and $t$ of Example~\ref{example:OFR-2dim}, the four OIDNs
  $q^0$, $q^2$, $q^3$, and $q^5$ are enough for enabling full bandwidth between
  $s$ and $t$.
\end{remark}
\begin{remark}
  If for some dimension $i$ the distance between $s_i$ and $t_i$ is not smaller
  than $\frac{k_n}{2} - \Delta$, then the computed OIDNs can ``jump'' into
  another orthant because of the torus wraparound. The above procedure still
  works correctly, but the OIDNs obtained now behave, for some dimensions,
  similarly to WIDNs (because of the wraparound).
\end{remark}

Outflanking through an OIDN causes a path dilation: the total derouted distance
$\tilde d(s,t)$ is larger than the minimal distance $d(s,t)$. In detail, using
an OIDN $q$ yields an additional path distance $\delta(q)$, but this
extra-distance is always bounded by a constant (once $n$ and $\Delta$ are
fixed):
\begin{align}
  \tilde d_q(s,t) &\deq d(s,q)+d(q,t) = d(s,t) + \delta(q) \mc & \forall q \; \delta(q) &\leq 2n\Delta \mf  
\end{align}
We have set $\Delta$ to a constant, but one might also consider to set $\Delta$
to be a fraction of $d(s,t)$, so as to increase the outflank throughput while
keeping the dilation bounded by a \emph{multiplicative} factor.
Focusing again on Fig.~\ref{fig:inline-2d-of}, the additive dilations of the
admissible OIDNs are
\begin{align}
  \delta^0&=4\Delta \mc & \delta^1&=2\Delta \mc & \delta^2&=4\Delta \mc\\
  \delta^3&=4\Delta \mc & \delta^4&=2\Delta \mc & \delta^5&=4\Delta \mf
\end{align}
Using the dilations above we can choose another subset of the admissible OIDNs
which still covers all the \nols\ as before, but induces smaller dilations
(e.g.: $q^0$, $q^1$, $q^4$, and $q^5$).

\begin{example}[OFR in three-dimensional tori]
  We list now the OIDNs used by the OFR router for the three-dimensional case
  (the one implemented in the simulator). We assume, w.l.o.g., that if a \mil\
  exists along some dimension $i$, then it is positively oriented ($i+$).

  \begin{itemize}
  \item $s_1\not= t_1, s_2\not= t_2, s_3\not= t_3$ ($s$ and $t$ not coplanar)
    \begin{align}
      \lambda^0&=(0,-1,1) \mc & \lambda^1&=(0,1,-1) \mc & \lambda^2&=(-1,0,1) \mc \\
      \lambda^3&=(1,0,-1) \mc & \lambda^4&=(-1,1,0) \mc & \lambda^5&=(1,-1,0)
      \mf
    \end{align}
  \item $s_1= t_1, s_2\not= t_2, s_3\not= t_3$ ($s$ and $t$ coplanar but not
    collinear, see Fig.~\ref{fig:coplanar-3d-ofr})
    \begin{align}
      \lambda^0&=(0,-1,1) \mc & \lambda^1&=(0,1,-1) \mc \\
      \lambda^2&=(1,0,0) \mc & \lambda^3&=(-1,0,0) \mf
    \end{align}
  \item $s_1= t_1, s_2= t_2, s_3\not= t_3$ ($s$ and $t$ collinear)
    \begin{align}
      \lambda^0&=(1,0,1) \mc & \lambda^1&=(-1,0,0) \mc \\
      \lambda^2&=(0,1,-1) \mc & \lambda^3&=(0,-1,0) \mf
    \end{align}
  \end{itemize}
\end{example}

\begin{figure*}
  \centering
  \includegraphics[width=14cm]{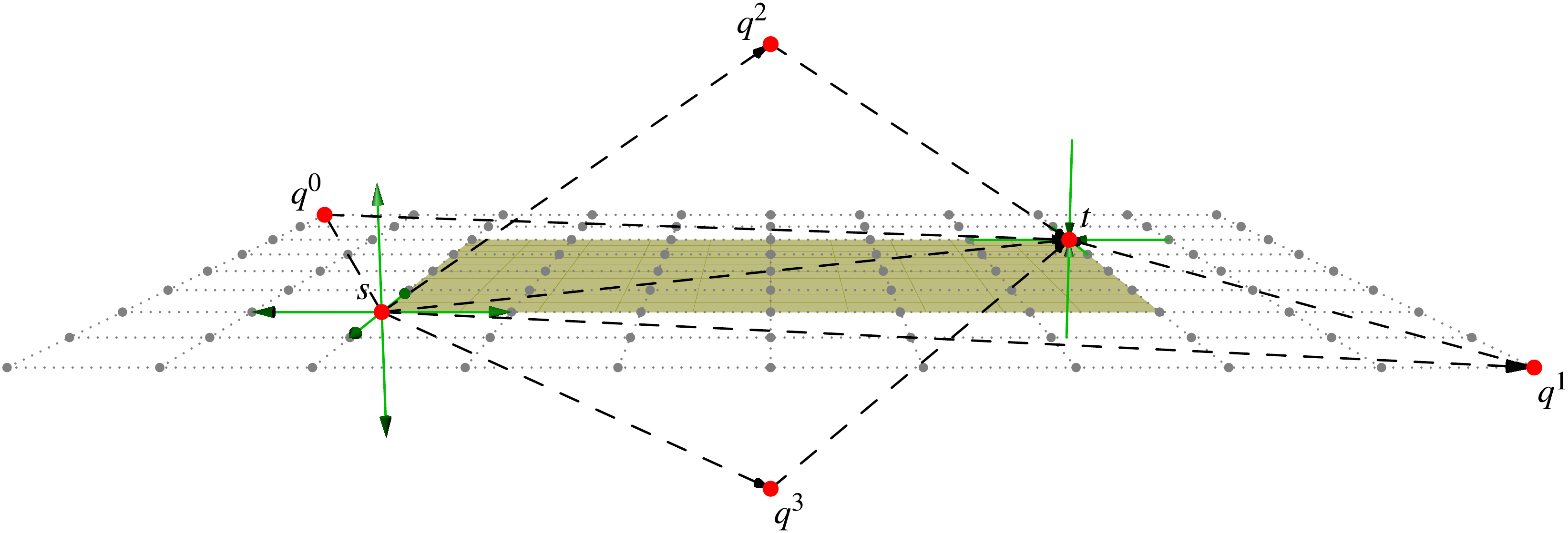}
  \caption{The four OIDNs used by OFR for coplanar $s$ and $t$ (three-dimensional
    case).}
  \label{fig:coplanar-3d-ofr}
\end{figure*}

\subsection{Choosing the IDN}
\label{sec:how-choose-interm}

To choose the most rewarding intermediate destination, OFR computes a profit for
each available IDN, taking into account both the dilation and the congestion of
the alternative route.
In more detail, OFR works as follows:
\begin{enumerate}
\item When a packet is newly injected, the router computes a set \qset\ of
  IDNs that covers all the \nols\ from $s$ to $t$.
\item For each $q\in\qset$ two values are computed:
  \begin{enumerate}
  \item The total distance $\tilde d_q(s,t)=d(s,q)+d(q,t)$.
  \item The average number $u_q(s)$ of used (non free) slots in the (local)
    outgoing links which are minimal from $s$ to~$q$.
  \end{enumerate}
\item We obtain a single profit, by scalarizing the two metrics in which we are
  interested, as
  \begin{equation}
    \label{eq:profit}
    \pi_q(s,t) \deq \frac{u_*(s)}{u_q(s,t)}+\eta\frac{d(s,t)}{\tilde d_q(s,t)} \mc
  \end{equation}
  where $u_*(s)$ is the minimum number of used slots in a link going out from
  $s$ and $\eta$ is parameter tuned empirically ($\eta=2.0$ in the simulations
  of Sec.~\ref{sec:exper-valid}). Also, the first fraction is set to one
  whenever $u_q(s,t)=u_*(s)=0$.
\item The analogous profit is computed also for the minimal (not outflanked)
  routing:
  \begin{equation}
    \pi_0(s,t) \deq \frac{u_*(s)}{u_0(s,t)}+\eta \mc
  \end{equation}
  where $u_0(s)$ is the average number of used slots in \mils\ from $s$.
\item If, for all $q$, $\pi_0(s,t) \geq \pi_q(s,t)$, then the packet is routed
  minimally and no derouting is performed.
\item Otherwise, $q^*$ is chosen as IDN, such as
  \begin{equation}
    q^* \deq \arg\max_q \pi_q(s,t) \mf
  \end{equation}
\end{enumerate}

\begin{remark}
  The same procedure has been used in our simulations to decide when to deroute
  packets while adopting POR, but the parameter $\eta$ has been independently
  optimized ($\eta=1.0$ for POR).
\end{remark}

\subsection{Deadlock and livelock}
\label{sec:deadlock-livelock}

In order to be usable, any routing policy needs to be deadlock-free, or to
include some mechanism to detect deadlocks and recover from them
\cite{AnjanP95,Martinez-RubioLD03}.  OFR, being a non-minimal strategy, might
also be subject to livelocks. 
Actually, if not carefully implemented, OFR can produce deadlocks: if a packet
reaches an IDN and the router simply tries to reroute it towards its final
destination, then a loop of dependencies might arise from other packets, since
derouting interferes with ABR dimension order routing in the escape subnetwork.

The following two theorems prove that OFR can be implemented without deadlocks
and livelocks.

\begin{theorem}
  OFR can be implemented in a deadlock-free way by adopting VCs in the escape
  network.
\end{theorem}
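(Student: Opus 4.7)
The plan is to turn each packet's routing into two cleanly separated phases and to give each phase its own copy of the escape subnetwork, so that the classical ABR deadlock-avoidance argument can be applied twice, independently.

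First I would partition the escape VCs on every unidirectional link into two disjoint sets, call them $E_1$ and $E_2$. A packet with an assigned IDN $q$ uses only $E_1$ channels while it is travelling from $s$ to $q$; the moment it is delivered to $q$ it is treated, with respect to the escape network, exactly as if it were freshly injected at $q$ with destination $t$, but from that point on it may use only $E_2$ channels. A packet that was not outflanked (no IDN chosen) enters the escape network directly on $E_2$. The adaptive subnetwork is left untouched: it still drains into the escape subnetwork whenever adaptive queues are full, with packets obeying the phase discipline above when they cross into escape.

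Next I would verify deadlock freedom in two steps. Within $E_i$ (for $i=1,2$) taken in isolation, every packet performs a \emph{minimal}, dimension-ordered, bubble-respecting traversal, toward either its IDN ($i=1$) or its final destination ($i=2$); this is exactly ABR on a single escape subnetwork, which is deadlock-free by \cite{PuenteBGPDI99,PuenteIBGVP01}. The key observation is that the resource-dependency graph for $E_1 \cup E_2$ is the disjoint union of the $E_1$ and $E_2$ dependency graphs: a buffer in $E_1$ can only be waited on by a packet still in its first leg, and such a packet never requests an $E_2$ buffer until it has fully freed its $E_1$ resources at $q$; symmetrically, an $E_2$ buffer is never waited on by a packet that still holds an $E_1$ buffer. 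Hence no cycle in the combined dependency graph can cross phases, and no cycle can live inside a single phase.

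The subtle point, which I expect to be the main obstacle, is the phase transition at the IDN: one has to make sure that the crossover from $E_1$ to $E_2$ does not itself introduce a dependency (e.g., an $E_2$ injection stalling because the bubble rule is violated, while the packet still occupies an $E_1$ slot). I would handle this by treating arrival at $q$ as a full re-injection: the packet must satisfy the usual two-free-slot bubble condition on the chosen $E_2$ queue before it is allowed to release its $E_1$ buffer, and if the condition is not met the packet simply waits in $E_1$ at $q$. Since the $E_2$ subnetwork is ABR-deadlock-free in isolation, the bubble condition is eventually satisfied, the $E_1$ buffer is released, and forward progress resumes. This closes the argument and proves the theorem.
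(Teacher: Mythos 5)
Your proposal is correct and follows essentially the same route as the paper: split the escape subnetwork into two virtual escape subnetworks (your $E_1$, $E_2$ are the paper's VS1, VS2), treat the crossover at the IDN as a fresh injection subject to the bubble condition, and observe that the second subnetwork is self-contained and deadlock-free, so it always drains and thereby frees resources for packets waiting to cross over. The only wobble is your intermediate claim that the combined dependency graph is a disjoint union --- the crossover at $q$ does create $E_1 \to E_2$ dependencies --- but your final paragraph correctly repairs this by noting that such dependencies are one-directional and always resolved because $E_2$ makes progress on its own, which is exactly the paper's argument.
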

\begin{proof}
  OFR can interfere with ABR mechanisms to prevent deadlocks in the following
  way: when a packet reaches its intermediate destination and it is rerouted
  toward its final destination, it might come from an ABR escape VC and try to
  enter into another escape VC which is lower in the dimension order used by
  ABR. This could lead to cyclic dependencies and hence deadlock in OFR.
  To avoid this, it is enough to split the escape subnetwork in two virtual
  escape subnetworks \vsa\ and \vsb. \vsa\ is used only by packets which are
  reaching an IDN, whereas \vsb\ is used only by packets routed towards their
  final destination (note that we only need to split the escape subnetwork, not
  the adaptive one). An insertion into \vsb\ of a packet which has reached its
  IDN is treated as an injection, i.e., the insertion is allowed only if it does
  not saturate the queue. Packets which have entered \vsb\ will always be
  delivered, since \vsb\ is deadlock-free and does not depend on other external
  resources. Therefore, some resource will be freed up for packets which want to
  enter in \vsb\ from \vsa\ (or directly from the generators if no IDN is given
  for a packet), propagating the deadlock-freedom to the whole network.
\end{proof}

\begin{theorem}
  OFR is livelock-free.
\end{theorem}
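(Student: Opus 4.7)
The plan is to show that every packet reaches its destination in a number of hops bounded by a constant depending only on the torus dimensions and the parameter $\Delta$, which directly rules out livelock. The crucial structural observation to exploit is that OFR, as described in Section~\ref{sec:outfl-rout-algor}, takes its derouting decision \emph{once only}, at the time of injection: after an IDN $q$ has been chosen (or none has), the packet's itinerary is fixed as a minimal route $s \to q$ followed by a minimal route $q \to t$, both performed by the underlying ABR substrate.

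First I would formalize this by splitting the life of a packet into (at most) two phases. In phase one, the packet travels from $s$ to its chosen IDN $q$; in phase two, from $q$ to the final destination $t$. (If no IDN was chosen, phase one is empty.) In both phases the packet is handled by ABR, which is fully minimal: every hop strictly decreases the remaining minimal distance to the current target, and ABR is known to be deadlock-free (so by the previous theorem no packet is ever permanently blocked). Hence each phase terminates after exactly $d(s,q)$ and $d(q,t)$ hops respectively.

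Next I would invoke the bound established immediately after the definition of OIDNs,
\begin{equation*}
\tilde d_q(s,t) \;=\; d(s,q) + d(q,t) \;\leq\; d(s,t) + 2n\Delta,
\end{equation*}
and the analogous (trivially bounded) bound for WIDNs, which are at most $\sum_i k_i/2$ hops away. In either case the total hop count is bounded by a constant that depends only on the network dimensions and $\Delta$, independently of the state of the network. Therefore the packet reaches $t$ in finitely many hops, and livelock cannot occur.

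The main subtlety, and what I expect to be the only real obstacle, is to justify that the transition between phase one and phase two does not silently re-invoke the derouting machinery (which could, in principle, perpetually re-elect new IDNs and extend the path). This is guaranteed by the design choice made in Section~\ref{sec:how-choose-interm}: the IDN is computed only for \emph{newly injected} packets, and the deadlock-avoidance construction of the previous theorem (the \vsa/\vsb{} split) ensures that, upon reaching its IDN, the packet is simply handed to ABR as a minimally routed packet towards $t$, without any further derouting. Once this point is made explicit, the bounded-path argument concludes the proof.
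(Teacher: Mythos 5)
Your proposal is correct and follows essentially the same route as the paper's own proof: the key observations in both are that derouting is decided only once, at injection, so a packet's itinerary decomposes into at most two phases $s \rightarrow q$ and $q \rightarrow t$, each routed minimally, which bounds the total hop count and excludes livelock. Your version merely spells out the quantitative bound on $\tilde d_q(s,t)$ and the hand-off at the IDN in more detail than the paper does.
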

\begin{proof}
  OFR deroutes packets by transforming a path $s \rightarrow t$ into a longer
  path $s \rightarrow q \rightarrow t$. Since derouting can only happen to newly
  injected packets (i.e., once in a packet lifetime) there cannot be path
  decompositions with more than one intermediate destination. Since each of the
  two travel phases $s \rightarrow q$ and $q \rightarrow t$ are routed
  minimally, livelock cannot occur.
\end{proof}

\section{Experimental validation}
\label{sec:exper-valid}

Routing performance is extremely sensitive to implementation details, network
architecture and algorithmic parameters. To get a realistic estimate of OFR
behavior we have therefore chosen to develop an accurate network simulator.
Our discrete event simulator has been written within the OMNeT++ network
simulation framework \cite{Varga10}; the simulator code will be released under
the GNU Lesser General Public License v3.0 \cite{lgpl}. The implemented network
is inspired by the three-dimensional toroidal network being developed within the
AuroraScience project \cite{PivantiSS11}.

We have tested the network performance of OFR under a sustained traffic and with
different communication patterns, and we have compared OFR with both ABR and POR
routing schemes. To make the comparison as uniform as possible we have
implemented POR (as well as OFR) in an ABR-like fashion and we have chosen when
to deroute packets for POR using the technique described in
\S\ref{sec:how-choose-interm} for OFR, but optimized independently for
maximizing POR performance.

Traffic is generated in bursts (messages) of $M$ packets, according to the
chosen communication pattern.  The offered load $\gamma$, the rate at which
packets are generated, is normalized as follows: it is equal to $\gamma_0=1$
when it saturates the bisection bandwidth of the network under uniform traffic
(i.e., $\gamma_0=1$ is an upper bound to the sustainable injection rate if the
packets are sent to uniformly random destinations). Depending on the
communication pattern $P$ and routing algorithm $R$ there is a value
$\gamma^*(R,P)$ at which the network is saturated and cannot sustain the traffic
(i.e., the average lifetime of the packets grows to infinity).  We are
interested in both the maximum sustained throughput $\gamma^*$ and in the average
latency of packets for fixed values of the offered load $\gamma$.
In the plots of Fig.~\ref{fig:plots-16} we show, for each simulated
communication pattern on the $16 \times 16 \time 16$ network, the average packet
lifetime as a function of $\gamma$.  The times are plotted only for values of
$\gamma$ which did not saturate the network (i.e., times not reported are
infinite). A summary of the achieved maximum throughputs $\gamma^*$, for the tested
routing algorithms and communication patterns, both for $k=8$ and $k=16$, is
given in Table~\ref{tab:peakbandwidth}.
The following five communication patterns have been simulated (note that all but
the first one are permutation patterns):
\begin{description}
\item[\rm\em Uniform:] Destinations are chosen independently and uniformly at
  random for each message.
\item[\rm\em Butterfly/FFT:] Messages are sent to nodes with Hamming distance
  one; the position of the flipped bit is increased at each message. This
  communication pattern is used, e.g., in FFT and bitonic sort.
\item[\rm\em Transposition:] The usual matrix transposition: $(i,j) \mapsto (j,i)$. It
  assumes the number of nodes $N$ to be square.
\item[\rm\em 3D Transposition:] 3D coordinates rotation: $(x,y,z) \mapsto (y,z,x)$ (this
  communication pattern has applications in medical imaging and 3D-FFT
  \cite{ElmoursyEtal08}).
\item[\rm\em Bit-Reverse:] Destination is the bit reverse of the source (e.g.,
  $0100 \mapsto 0010$). This pattern (which appears, e.g., in the FFT) is the
  worst case one for Dimension Order Routing on $n$-dimensional mesh-like
  networks.
\end{description}

We have simulated three-dimensional toroidal networks, therefore each router is
connected to six neighbors.  For each unidirectional link there are three VCs (one
adaptive and two escape ones) and the capacity of each VC is $C$ packets.
Routers also handle ACKs/NAKs to acknowledge packets acceptance, but we omit the
details in this paper due to space limitations.
Available packets are sent to the router for injection at rate $2.4\, \gamma_0$
(to quickly fill the router's queues when a message is generated).
We have simulated both a $8 \times 8 \times 8$ (512 nodes) and a $16 \times 16
\times 16$ (4096 nodes) torus network, with the parameters reported in
Table~\ref{tab:simu-param}.
The simulations have run in parallel on 160 threads on an Intel shared-memory
system (Xeon CPU E7-8850, 8 processors, 10 cores/processor, 2
hyper-threads/core) with a GNU/Linux operating system, and have taken 40 hours to
complete.
For each single packet flowing in the network the following information has been
tracked:
\begin{enumerate}
\item Number of traveled hops,
\item Lifetime from generation to consumption,
\item Whether and how derouted (WIDN vs OIDN).
\end{enumerate}

\begin{table}
  \caption{Parameters of the simulator.}
  \label{tab:simu-param}
  \centering
  \begin{tabular}{cl@{ }r@{ }l}
    \textbf{Parameter} & \textbf{Description} & \multicolumn{2}{l}{\textbf{Value}}\\
    \hline
    $S$ & Packet size & 512 & B\\
    $C$ & Queues capacity & 8 & packets\\
    $M$ & Message size & 96 & packets\\
    $\lambda_{\ii}$ & Internal latency & 80 & ns\\
    $B_{\ii}$ & Internal bandwidth & 64 & Gb/s\\
    $\lambda_{\ee}$ & External latency & 200 & ns\\
    $B_{\ee}$ & External bandwidth & 20 & Gb/s\\
    $k$ & Network size ($k \times k \times k$) & 8 & or 16 \\
    \hline
  \end{tabular}
\end{table}

\begin{figure*}
  \centering
  \begin{tabular}{ccc}
    \includegraphics[page=1,height=\plen]{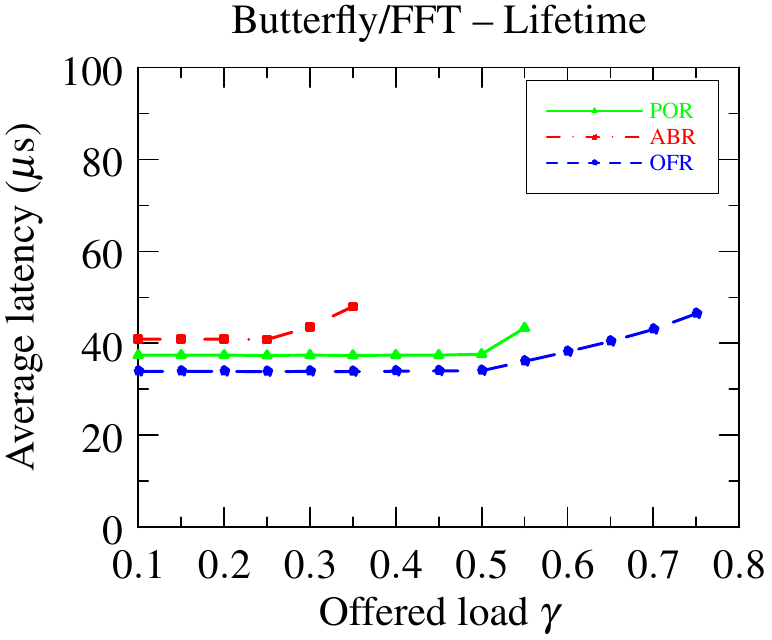} &
    \includegraphics[page=2,height=\plen]{images/plots-16.pdf} &
    \includegraphics[page=3,height=\plen]{images/plots-16.pdf} \\[5mm]
  \end{tabular}
  \begin{tabular}{cc}
    \includegraphics[page=4,height=\plen]{images/plots-16.pdf} &
    \includegraphics[page=5,height=\plen]{images/plots-16.pdf} 
  \end{tabular}
  \label{fig:plots-16}
  \caption{Simulation results: OFR vs.\ ABR vs.\ POR -- Average packet lifetime,
    $16 \times 16 \times 16$ network. Latencies not shown (e.g., ABR one for
    Butterfly/FFT and $\gamma=0.5$) are unbounded, the largest shown offered
    load for each pattern/routing policy is its maximum sustained throughput
    $\gamma^*$ (e.g., for Butterfly/FFT and ABR, $\gamma^*=0.35$). A summary of
    the maximum throughputs is given in Table~\ref{tab:peakbandwidth}, for both
    $k=16$ and $k=8$.}
\end{figure*}

\begin{table}
  \caption{maximum throughput $\gamma^*$ (in $\gamma_0$ units) on $k \times k
    \times k$ networks (except Transposition which has been tested on a $ 16 \times 8
    \times 8$ and on a $16 \times 16 \times 16 $ networks). Best results for
    each pattern/network size $k$ are highlighted in bold.}
  \label{tab:peakbandwidth}
  \centering
  \begin{tabular}{rr@{}cccccc}
    \textbf{Pattern} & & \multicolumn{2}{c}{\textbf{ABR}} & \multicolumn{2}{c}{\textbf{POR}} & \multicolumn{2}{c}{\textbf{OFR}} \\
    & $k=$ & \textbf{8} & \textbf{16} & \textbf{8} & \textbf{16} & \textbf{8} & \textbf{16} \\
    \hline
    Butterfly/FFT    && 0.30 & 0.35 & 0.55 & 0.55 & {\bf 0.60} & {\bf 0.75} \\
    Transposition    && 0.50 & 0.55 & {\bf 0.60} & 0.70 & 0.50 & {\bf 0.75} \\
    3D Transposition && 0.25 & 0.20 & {\bf 0.45} & {\bf 0.40}  & {\bf 0.45} & 0.35 \\ 
    Uniform          && 0.55 & 0.70 & {\bf 0.70} & {\bf 0.80} & {\bf 0.70} & {\bf 0.80} \\
    Bit-Reverse      && 0.35 & 0.40 & {\bf 0.60} & 0.50 & 0.50 & {\bf 0.60} \\
    \hline
  \end{tabular}
\end{table}

\begin{table}
  \caption{Ratio of derouted packets. OFR (OIDN and WIDN) vs.\ POR on a $16 \times 16
    \times 16$ network.}
  \label{tab:derouted}
  \centering
  \begin{tabular}{rccccccc}
    \textbf{Pattern} & \textbf{OFR/OIDN} & \textbf{OFR/WIDN} & \textbf{OFR (TOT)} & \textbf{POR}\\
    \hline
    Butterfly/FFT    & 0.31 & 0.10 & 0.41 & 0.35 \\
    Transposition    & 0.04 & 0.10 & 0.14 & 0.10 \\
    3D Transposition & 0.32 & 0.20 & 0.52 & 0.53 \\ 
    Uniform          & 0.06 & 0.12 & 0.18 & 0.13 \\
    Bit-Reverse      & 0.09 & 0.12 & 0.21 & 0.24 \\
    \hline
  \end{tabular}
\end{table}

\subsection{Results}
\label{sec:results}

The improvement achieved by OFR against ABR in maximum throughput is quite
remarkable, both for the $k=8$ and $k=16$ networks. The data reported in
Table~\ref{tab:peakbandwidth} for the $16 \times 16 \times 16$ network show that
there is an increment in the maximum sustainable bandwidth, which goes from
$+14\%$ for Uniform up to $+114\%$ for the extensively used Butterfly/FFT
communication pattern.

Since the POR policy already exploits non-minimal links, the improvement against
it is less conspicuous.
However, our policy increases its perform as the size of the network increases.
This is because the outflank derouting paths exploited by OFR come with a
constant dilation of up to $6\Delta$ ($=12$ hops in our simulations) which is
significant for small networks but becomes negligible as the average path length
increases.  Typical network size in top HPC systems \cite{top500} is some order
of magnitude larger than the networks, so the benefits of adopting OFR in these
systems are even stronger.
We can see empirical evidence of the improvement in performance for larger sizes
in our simulations.  Comparing OFR against POR we see that for the smaller $8
\times 8 \times 8$ network the maximum sustainable bandwidth is quite similar,
but POR wins over OFR in two cases, in one cases it loses and in two there is a
draw.
When moving to the larger $16 \times 16 \times 16$ network the situation is
reversed: OFR wins in three cases, loses in one and there is one draw,
registering an improvement against POR which is up to a $+36\%$ for the
Butterfly/FFT pattern.

In Table~\ref{tab:derouted} we show the ratio of packets which have been
derouted both by the OFR and the POR algorithms. The values remain pretty stable
within the range $\gamma\le \gamma^*$ (see, e.g., Fig.~\ref{fig:derouted}). One
might expect the ratio of derouted packets to be smaller at lower injection
rates than at higher ones, but since the traffic is generated in messages each
of $96$ packets, the routers' queues are rapidly filled even when the injection
rate is low.

\begin{figure}
  \centering
  \includegraphics[page=1,height=\plen]{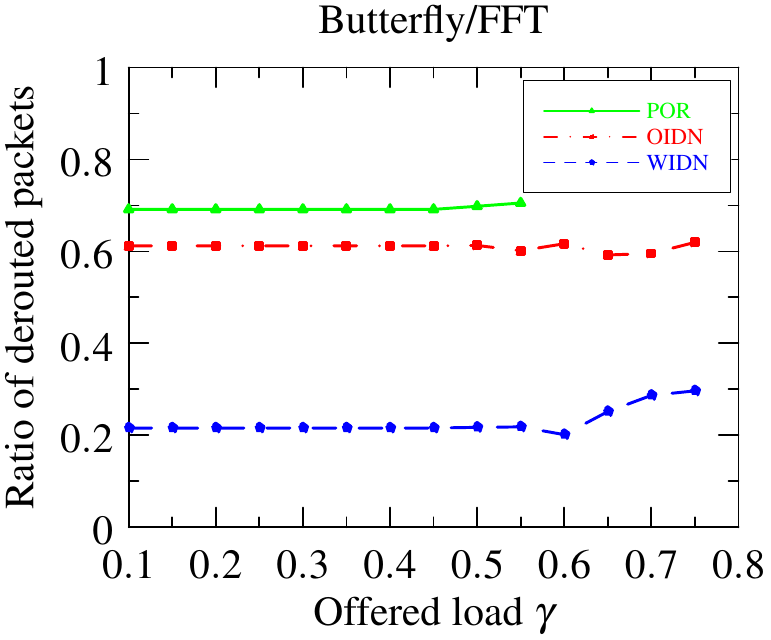}
  \caption{Ratio of derouted packets (OFR vs.\ POR) for the Butterfly/FFT
    pattern on a $16 \times 16 \times 16$ network.}
  \label{fig:derouted}
\end{figure}

\section{Conclusions and future work}
\label{sec:concl-future-work}

In this paper we have shown the potential of the OFR strategy, by exhibiting
significant improvements on network performance for some realistic communication
patterns. However, a few issues deserve further investigation:
\begin{itemize}
\item Before designing the hardware implementation of OFR, it would be
  interesting to strengthen its validation by testing the simulator on actual
  communication traces, rather than on synthetically generated ones.
\item Although the profit function (\ref{eq:profit}) adopted by OFR has proved
  to provide a good trade-off between congestion and path dilation, we would
  like to acquire a deeper understanding of the different effects of trading
  dilation for congestion, in order to theoretically justify the parameters and
  results obtained by empirical means.
\end{itemize}

\section*{Acknowledgment}
\label{sec:acknowledgment}

We wish to express our gratitude to Fabio Schifano, for providing useful
insights on the routing hardware and helping to discriminate implementable ideas
from unimplementable ones. We would also like to thank Dirk Pleiter for pointing
out some useful related work.

\bibliographystyle{acm}
\bibliography{aurouting}

\end{document}